\documentclass[a4paper]{amsart}
\usepackage[utf8]{inputenc}
\usepackage{amssymb}

\theoremstyle{plain}

  \newtheorem{prop}{Proposition}
  \newtheorem*{cor}{Corollary}
  
\theoremstyle{definition}

  \newtheorem{rem}{Remark}

\newcommand{\mf}{\mathfrak}
\newcommand{\mc}{\mathcal}
\newcommand{\on}{\operatorname}
\newcommand{\g}{\mathfrak{g}}
\newcommand{\h}{\mathfrak{h}}
\newcommand{\dd}{\mathfrak{d}}

\newcommand{\la}{\langle}
\newcommand{\ra}{\rangle}
\newcommand{\R}{\mathbb{R}}
\newcommand{\C}{\mathbb{C}}

\newcommand{\half}{\frac{1}{2}}


\title{On integrability of 2-dimensional $\sigma$-models of Poisson-Lie type}
\author{Pavol Ševera}
\address{Section of Mathematics, University of Geneva, Switzerland}
\email{pavol.severa@gmail.com}
\thanks{Supported in part by  the grant MODFLAT of the European Research Council and the NCCR SwissMAP of the Swiss National Science Foundation.}

\begin{document}
\maketitle
\begin{abstract}
We describe a simple procedure for constructing a Lax pair for suitable 2-dimensional $\sigma$-models appearing in Poisson-Lie T-duality
\end{abstract}
\section{Introduction}
There is a class of 2-dimensional $\sigma$-models, introduced in the context of Poisson-Lie T-duality \cite{ks}, whose solutions are naturally described in terms of certain flat connections. The target space of such a $\sigma$-model is $D/H$, where $D$ is a Lie group and $H\subset D$ a subgroup. The $\sigma$-model is defined by the following data: an invariant symmetric non-degenerate pairing $\la,\ra$ on the Lie algebra $\dd$ such that the Lie subalgebra $\h\subset\dd$ is Lagrangian, i.e. $\h^\perp=\h$, and  a subspace $V_+\subset\dd$ such that $\dim V_+=(\dim\dd)/2$ and such that $\la,\ra|_{V_+}$ is positive definite. The construction and properties of these $\sigma$-models are recalled in Section \ref{sec:PLTD} (including the Poisson-Lie T-duality, which says that the $\sigma$-model, seen as a Hamiltonian system, is essentially independent of $H$). Let us call them \emph{$\sigma$-models of Poisson-Lie type}.

The solutions $\Sigma\to D/H$ of equations of motion of such a $\sigma$-model can be encoded in terms of $\dd$-valued 1-forms $A\in\Omega^1(\Sigma,\dd)$ satisfying 
\begin{subequations}\label{sols}
\begin{equation}\label{Aflat}
dA+[A,A]/2=0
\end{equation}
\begin{equation}
A\in\Omega^{1,0}(\Sigma, V_+) \oplus \Omega^{0,1}(\Sigma,V_-),
\end{equation}
\end{subequations}
where $V_-:=(V_+)^\perp\subset\dd$.
Namely, the flatness \eqref{Aflat} of $A$ implies that there is a map $\ell:\tilde\Sigma\to D$ (where $\tilde\Sigma$ is the universal cover of $\Sigma$) such that $A=-d\ell\,\ell^{-1}$. If the holonomy of $A$ is in $H$ then $\ell$ gives us a well-defined map $\Sigma\to D/H$. The maps $\Sigma\to D/H$ obtained in this way are exactly the solutions of equations of motion.


As first observed by Klimčík \cite{kl}, and later by Sfetsos \cite{sf}, and Delduc, Magro, and Vicedo \cite{dmv2}, some $\sigma$-models of Poisson-Lie type are integrable. Their integrability is proven by finding a Lax pair, i.e.\ a 1-parameter family of flat connections (with parameter $\lambda$)
$$A_\lambda\in\Omega^1(\Sigma,\g)\qquad dA_\lambda+[A_\lambda,A_\lambda]/2=0$$
where $\g$ is a suitable semisimple Lie algebra. Such a family is constructed for every element of the phase space, i.e.\ for every $A\in\Omega^1(\Sigma,\dd)$ satisfying \eqref{sols}.

The aim of this note is to make the construction of $A_\lambda$ transparent. We simply observe that if $A\in\Omega^1(\Sigma,\dd)$ satisfies \eqref{sols} and if $p:\dd\to\g$ is a linear map such that
$$[p(X),p(Y)]=p([X,Y])\quad\forall X\in V_+,\;Y\in V_-$$  then 
$$d\,p(A)+[p(A),p(A)]/2=0.$$
A suitable family $p_\lambda:\dd\to\g$ will then give us a family of flat connections
$$A_\lambda=p_\lambda(A).$$

As an example, we provide a very simple construction of such families $p_\lambda$ in the case when $\dd=\g\otimes W$, where $W$ is a 2-dimensional commutative algebra. These families recover the deformations of the principal chiral model from \cite{dmv2,kl,sf}. Our purpose is thus modest - it is simply to clarify previously constructed integrable $\sigma$-models. There is possibly a less naive construction of families $p_\lambda$ that might produce new integrable models, but we leave this question open. 

\section{$\sigma$-models of Poisson-Lie type and Poisson-Lie T-duality}\label{sec:PLTD}
In this section we review the properties of the ``2-dimensional $\sigma$-models of Poisson-Lie type'' introduced in \cite{ks} (together with their Hamiltonian picture from \cite{ks-ham} and using the target spaces of the form $D/H$, as introduced in \cite{ks-D/H}).

Let $\dd$ be a Lie algebra with an invariant non-degenerate symmetric bilinear form $\la,\ra$ of symmetric signature  and let $V_+\subset\dd$ be a linear subspace with $\dim V_+=(\dim\dd)/2$, such that $\la,\ra|_{V_+}$ is positive-definite. 

Let $M=D/H$ where $D$ is a connected Lie group integrating $\dd$ and $H\subset D$ is a closed connected subgroup such that its Lie algebra $\h\subset\dd$ is Lagrangian in $\dd$.

This data defines a Riemannian metric $g$ and a closed 3-form $\eta$ on $M$. They are given by
\begin{align*}
g(\rho(X),\rho(Y))&=\half\la X,Y\ra\qquad\forall X,Y\in V_+\\
p^*\eta&=-\half\eta_D+\half d\la\mc A,\theta_L\ra
\end{align*}
Here $\rho$ is the action of $\dd$ on $M=D/H$, $p:D\to D/H$ is the projection, $\eta_D\in\Omega^3(D)$ is the Cartan 3-form (given by $\eta_D(X^L,Y^L,Z^L)=\la[X,Y],Z\ra$ ($\forall X,Y,Z\in\dd$)), $\theta_L\in\Omega^1(D,\dd)$ is the left-invariant Maurer-Cartan form on $D$ (i.e.\ $\theta_L(X^L)=X$), and $\mc A\in\Omega^1(D,\h)$ is the connection on the principal $H$-bundle $p:D\to D/H$ whose horizontal spaces are the right-translates of $V_+$.\footnote{The conceptual definition of $g$ and $\eta$ is as follows: the trivial vector bundle $\dd\times M\to M$ is naturally an exact Courant algebroid, with the anchor given by $\rho$ and the Courant bracket of its constant sections being the Lie bracket on $\dd$. Then $V_+\times M\subset\dd\times M$ is a generalized metric, which  is equivalent to the metric $g$ and the closed 3-form $\eta$. We shall not use this language in this paper, in order to keep it short.}
%

The metric $g$ and the 3-form $\eta$ then define a $\sigma$-model with the standard action
 functional
$$S(f)=\int_\Sigma g(\partial_+f,\partial_-f)+\int_Yf^*\eta$$
where $\Sigma$ is (say) the cylinder with the usual metric $d\sigma^2-d\tau^2$ and $f:\Sigma\to M$ is a map extended to the solid cylinder $Y$ with boundary $\Sigma$.

For our purposes, the main properties of these $\sigma$-models are the following:

\begin{itemize}
\item The solutions of the equations of motion are in (almost) 1-1 correspondence with 1-forms $A\in\Omega^1(\Sigma,\dd)$ satisfying \eqref{sols}. More precisely, a map $f:\Sigma\to M$ is a solution iff it admits a lift $\ell:\tilde\Sigma\to D$ such that $A:=-d\ell\,\ell^{-1}$ satisfies \eqref{sols}. Notice that $A$ is uniquely specified by $f$ (the lift $\ell$ is not unique - it can be multiplied by an element of $H$ on the right).
\item When we restrict $A$ to $S^1\subset\Sigma=S^1\times\R$, we get a 1-form $j(\sigma)d\sigma\in\Omega^1(S^1,\dd)$. The $\dd$-valued functions $j(\sigma)$ on the phase space of the sigma model satisfy the current algebra Poisson bracket
\begin{equation}\label{curr}
\{j_a(\sigma),j_b(\sigma')\}=f_{ab}^c\, j_c(\sigma)\,\delta(\sigma-\sigma')+t_{ab}\,\delta'(\sigma-\sigma')
\end{equation}
(written using a basis $e^a$ of $\dd$, with $f_{ab}^c$ being the structure constants of $\dd$ and $t_{ab}$ the inverse of the matrix of $\la e^a,e^b\ra$).
The Hamiltonian of the $\sigma$-model is
\begin{equation}\label{ham}
\mathcal H=\half\int_{S^1}\la j(\sigma),Rj(\sigma)\ra\,d\sigma
\end{equation}
where $R:\dd\to\dd$ is the reflection w.r.t.\ $V_+$.
%
\end{itemize}

Finally, let us observe that the phase space of the $\sigma$-model depends on the choice of $H\subset D$ only mildly; when we impose the constraint that $A$ has unit holonomy, the reduced Hamiltonian system is independent of $H$. This statement is the Poisson-Lie T-duality (in the case of no spectators). (In more detail, the phase space of the $\sigma$-model is the space of maps $\ell:\R\to D$ which are quasi-periodic in the sense that for some $h\in H$ we have $\ell(\sigma+2\pi)=\ell(\sigma)h$, modulo the  action of $H$ by right multiplication. The reduced phase space is $(LD)/D$ (i.e.\ periodic maps modulo the action of $D$); it is the subspace of $\Omega^1(S^1,\dd)$ given by the unit holonomy constraint.)

\section{Constructing new flat connections}

As we have seen, the solutions of our $\sigma$-model give rise to flat connections $A\in\Omega^1(\Sigma,\dd)$ satisfying \eqref{sols}.
We can obtain new flat connections out of $A$ using the following simple observation, which is also the main idea of this paper.

\begin{prop}\label{prop:p}
Let $\g$ be a Lie algebra and let $p:\dd\to\g$ be a linear map such that
\begin{equation}\label{V-morph}
[p(X),p(Y)]=p([X,Y])\quad\forall X\in V_+,\;Y\in V_-.
\end{equation}
If $A\in\Omega^1(\Sigma,\dd)$ satisfies \eqref{sols} then $p(A)\in\Omega^1(\Sigma,\g)$ is flat, i.e.
$$d\,p(A)+[p(A),p(A)]/2=0.$$
\end{prop}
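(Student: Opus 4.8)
The plan is to decompose $A$ according to its type and to observe that flatness reduces to a single ``mixed'' term that is controlled exactly by the hypothesis \eqref{V-morph}. Write $A = A_+ + A_-$ with $A_+ \in \Omega^{1,0}(\Sigma, V_+)$ and $A_- \in \Omega^{0,1}(\Sigma, V_-)$. Since $p$ is a fixed linear map it commutes with $d$, so $d\,p(A) = p(dA)$; hence the whole identity will follow once I show that $[p(A),p(A)]/2 = p\bigl([A,A]/2\bigr)$, after which applying $p$ to the flatness equation \eqref{Aflat} finishes the argument.

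First I would expand both brackets. For $\dd$-valued $1$-forms the bracket is graded-symmetric, so $[A,A] = [A_+,A_+] + 2[A_+,A_-] + [A_-,A_-]$, and likewise for $p(A)$. The key observation is that the two ``pure'' terms vanish for type reasons: $A_+$ is a $(1,0)$-form, so $[A_+,A_+]$ involves only products $dz\wedge dz = 0$, and similarly $[A_-,A_-]=0$ because $A_-$ is a $(0,1)$-form. Hence $[A,A]/2 = [A_+,A_-]$, and the identical computation with $p(A_\pm)$ in place of $A_\pm$ gives $[p(A),p(A)]/2 = [p(A_+),p(A_-)]$.

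It then remains to compare $[p(A_+),p(A_-)]$ with $p([A_+,A_-])$. Writing $A_+ = a\,dz$ and $A_- = b\,d\bar z$ with $a$ valued in $V_+$ and $b$ valued in $V_-$, both sides are $2$-forms proportional to $dz\wedge d\bar z$, with respective coefficients $[p(a),p(b)]$ and $p([a,b])$. These agree pointwise by the hypothesis \eqref{V-morph}, applied to the values $a(x)\in V_+$ and $b(x)\in V_-$ at each point $x\in\Sigma$. This yields $[p(A),p(A)]/2 = p\bigl([A,A]/2\bigr)$, and combined with $d\,p(A)=p(dA)$ we obtain $d\,p(A) + [p(A),p(A)]/2 = p\bigl(dA + [A,A]/2\bigr) = 0$.

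I expect the only real subtlety to be the vanishing of the pure terms $[A_+,A_+]$ and $[A_-,A_-]$: this is exactly where the type condition in \eqref{sols} is essential, and it is also what makes the restricted hypothesis \eqref{V-morph} sufficient — we never need $p$ to respect brackets of two elements of $V_+$ (or two of $V_-$), since such brackets simply do not occur. A minor bookkeeping point I would fix at the outset is the sign convention for the bracket of Lie-algebra-valued $1$-forms, to be sure the cross terms combine into $2[A_+,A_-]$ rather than cancelling.
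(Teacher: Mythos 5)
Your proof is correct and is essentially the paper's own argument: the paper likewise writes $A=A^++A^-$, uses the type decomposition to reduce $[p(A),p(A)]/2$ to the mixed term $[p(A^+),p(A^-)]$, applies \eqref{V-morph} pointwise, and pulls $p$ through $d$ to conclude $d\,p(A)+[p(A),p(A)]/2=p(dA+[A,A]/2)=0$. The only difference is that you spell out the vanishing of the pure terms $[A_+,A_+]$ and $[A_-,A_-]$ in local coordinates, which the paper leaves implicit.
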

\begin{proof}
Let us use the following notation: for $\alpha\in\Omega^1(\Sigma)$ let $\alpha^+\in\Omega^{1,0}(\Sigma)$ and $\alpha^-\in\Omega^{0,1}(\Sigma)$ denote the components of $\alpha$, i.e.\ $\alpha=\alpha^++\alpha^-$. In particular, $A^+\in\Omega^{1,0}(\Sigma, V_+)$ and $A^-\in\Omega^{0,1}(\Sigma, V_-)$. We then have
\begin{multline*}
d\,p(A)+[p(A),p(A)]/2=d\,p(A)+[p(A^+),p(A^-)]=p(dA+[A^+,A^-])\\=p(dA+[A,A]/2)=0.
\end{multline*}
\end{proof}

Given a 1-parameter family of maps $p_\lambda:\dd\to\g$  satisfying \eqref{V-morph} we would thus get a 1-parameter family of flat connections $A_\lambda=p_\lambda(A)$ on $\Sigma$, which may then be used to show integrability of the model. Let us observe that the Poisson brackets of the ``Lax operators'' $L(\sigma,\lambda):=p_\lambda(j(\sigma))$ are automatically of the form considered in \cite{m1,m2} (i.e.\ containing a $\delta(\sigma-\sigma')$ and a $\delta'(\sigma-\sigma')$ term), and so one can in principle extract an infinite family of Poisson-commuting integrals of motion out of the holonomy of $A_\lambda$.

\begin{rem}
The procedure of finding integrable  deformations of integrable $\sigma$-models, due to Delduc, Magro, and Vicedo \cite{dmv}, can be rephrased in our formalism as follows. Suppose that for some particular pair $V_+\subset\dd$ we find a family $p_\lambda:\dd\to\g$ showing integrability of the model. Let us deform the Lie bracket on $\dd$, and possibly the pairing $\la,\ra$, in such a way that the restriction of the Lie bracket to $V_+\times V_-\to\dd$ is undeformed. Then the same family $p_\lambda$ will satisfy \eqref{V-morph} also for the deformed structure on $\dd$ and show  integrability of the deformed model. These deformations of $\dd$ do not change the system \eqref{sols} (and if $\la,\ra$ is not deformed then they don't change the Hamiltonian \eqref{ham} either), but they do change the Poisson structure \eqref{curr} on the phase space.
\end{rem}

\begin{rem}
There is a version of $\sigma$-models of Poisson-Lie type, introduced in \cite{ks-coset}, with the target space if $F\backslash D/H$, where $\mf f\subset \dd$ is an isotropic Lie algebra (and one needs to suppose that $F$ acts freely on $D/H$). In this case $V_+\subset\dd$ is required to be such that $\la,\ra|_{V_+}$ is semi-definite positive with kernel $\mf f$ (in particular, $\mf f\subset V_+$), and such that $[\mf f,V_+]\subset V_+$ (we still have $\dim V_+=(\dim \dd)/2$). The phase space is the Marsden-Weinstein reduction of $\Omega^1(S^1,\dd)$ by $LF$, i.e.\ $\Omega^1(S^1,\mf f^\perp)/LF$. The solutions of equations of motion are still given by the solutions of \eqref{sols}, though this time $A$ is defined only up to $F$-gauge transformations. In this case we can still use Proposition \ref{prop:p} without any changes. This setup should cover, in particular, the discussion of symmetric spaces in \cite{dmv}.
\end{rem}

\section{Getting a Lax pair in a simple case}

In this section we give a simple example of pairs $V_+\subset\dd$ with natural 1-parameter families $p_\lambda$ satisfying \eqref{V-morph}.

Let $\g$ be a Lie algebra with an invariant inner product $\la,\ra_\g$ and let $W$ be a 2-dimensional commutative associative algebra with unit. ($W$ is isomorphic to one of $\C$, $\R\oplus\R$, $\R[\epsilon]/(\epsilon^2)$.)
Let
$$\dd:=\g\otimes W$$
with the Lie bracket $[X_1\otimes w_1,X_2\otimes w_2]_\dd=[X_1,X_2]_\g\otimes w_1w_2.$

We choose the following additional data in $W$ to produce a pairing $\la,\ra$ on $\dd$ and a subspace $V_+\subset\dd$:

 To get the pairing, let $\theta:W\to\R$ be a linear form such that the pairing on $W$ given by $\la w_1,w_2\ra_W:=\theta(w_1w_2)$ is non-degenerate (i.e.\ such that it makes $W$ to a Frobenius algebra) and indefinite. The pairing on $\dd$ is then defined via
 $$\la X_1\otimes w_1,X_2\otimes w_2\ra:=\la X_1,X_2\ra_\g\, \theta(w_1w_2).$$
 
  To get $V_+\subset\dd$, let $V^0_+\subset W$ be a 1-dimensional subspace such that $\la,\ra_W$ is positive-definite on $V^0_+$. Let
  $$V_+=\g\otimes V^0_+.$$
   Then $V_-=\g\otimes V^0_-$ where $V^0_-=(V^0_+)^\perp$.

We can now describe the construction of a family $p_\lambda:\dd\to\g$ satisfying \eqref{V-morph}. Let us choose non-zero elements $e_+\in V^0_+$ and $e_-\in V^0_-$ (this choice is inessential).

\begin{prop}
If a linear form $q:W\to\R$ satisfies 
\begin{equation}\label{q}
q(e_+)q(e_-)=q(e_+e_-)
\end{equation}
then the map $p={\on{id}_\g}\otimes q:\dd\to\g$ satisfies \eqref{V-morph}. 
\end{prop}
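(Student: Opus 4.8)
The plan is to verify \eqref{V-morph} by a direct computation on a spanning set of $V_+\times V_-$. Since $V_+^0$ and $V_-^0$ are one-dimensional, spanned by $e_+$ and $e_-$ respectively, every element of $V_+$ is of the form $X_1\otimes e_+$ and every element of $V_-$ is of the form $X_2\otimes e_-$ with $X_1,X_2\in\g$ (up to scalar, which is absorbed into $X_1,X_2$). By bilinearity of both the bracket on $\dd$ and the map $p$, it suffices to check the identity on such simple tensors.

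First I would compute the left-hand side of \eqref{V-morph}. Using $p=\on{id}_\g\otimes q$, we have $p(X_1\otimes e_+)=q(e_+)\,X_1$ and $p(X_2\otimes e_-)=q(e_-)\,X_2$, so that
$$[p(X_1\otimes e_+),\,p(X_2\otimes e_-)]=q(e_+)\,q(e_-)\,[X_1,X_2]_\g.$$
Next I would compute the right-hand side. From the definition of the Lie bracket on $\dd$ we have $[X_1\otimes e_+,\,X_2\otimes e_-]_\dd=[X_1,X_2]_\g\otimes(e_+e_-)$, and applying $p$ gives
$$p\big([X_1\otimes e_+,\,X_2\otimes e_-]_\dd\big)=q(e_+e_-)\,[X_1,X_2]_\g.$$

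Comparing the two expressions, the required identity holds for all $X_1,X_2\in\g$ precisely when $q(e_+)\,q(e_-)=q(e_+e_-)$, which is exactly hypothesis \eqref{q}. There is essentially no obstacle here: the entire content is the observation that the scalar factor $q(e_+e_-)$ produced by the commutative product $e_+e_-$ in $W$ matches the product of scalars $q(e_+)\,q(e_-)$ exactly under \eqref{q}. The only point requiring a moment's care is that the one-dimensionality of $V_\pm^0$ lets us replace arbitrary elements of $V_\pm$ by single simple tensors without loss of generality, so that the computation on a spanning set already yields the full statement.
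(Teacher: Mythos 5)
Your proof is correct and is essentially the same as the paper's: both verify \eqref{V-morph} on elements $x\otimes e_+\in V_+$, $y\otimes e_-\in V_-$ (which, since $V_\pm^0$ are one-dimensional, exhaust $V_\pm$) and reduce the identity to the scalar hypothesis $q(e_+)q(e_-)=q(e_+e_-)$. Your extra remark justifying the reduction to simple tensors is a minor elaboration of what the paper leaves implicit.
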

\begin{proof}
If $X=x\otimes e_+\in V_+$ and $Y=y\otimes e_-\in V_-$ then $$p([X,Y]_\dd)=p\bigl([x,y]_\g\otimes e_+e_-\bigr)=q(e_+e_-)[x,y]_\g=[q(e_+)x, q(e_-)y]_\g=[p(X),p(Y)]_\g.$$
\end{proof}

The solutions $q\in W^*$ of \eqref{q} form a curve in $W^*$, which is either a hyperbola or a union of two lines. 
If 
$$e_+e_-=ae_++be_- \quad a,b\in\R,$$ 
we  rewrite \eqref{q} as
$$\bigl(q(e_+)-b\bigr)\bigl(q(e_-)-a\bigr)=ab.$$
We thus have a hyperbola if $ab\neq0$ and a union of two straight lines if $ab=0$. 

One can easily check that $ab=0$ iff one of $V^0_\pm$ is of the form $\R e$ where $e\in W$ satisfies $e^2=e$. This means that one of $V_\pm=\g\otimes V^0_\pm\subset\dd$ is a Lie subalgebra isomorphic to $\g$ and thus, according to \cite{ks-wzw}, for any Lagrangian $\h\subset\dd$, the corresponding $\sigma$-model is simply the WZW model given by $G$.

Let us now choose a rational parametrization $\lambda\mapsto q_\lambda$ of the hyperbola \eqref{q}. The standard parametrization in this context seems to be the one sending $\lambda=\pm1$ to the two points at the infinity of the hyperbola, and $\lambda=\infty$ to 0 (though any other parametrization would do). This gives
$$q_\lambda=\frac{q_+}{1+\lambda}+\frac{q_-}{1-\lambda}$$
where $q_+,q_-\in W^*$ are given by 
$$q_+(e_-)=q_-(e_+)=0,\quad q_+(e_+)=2b,\quad q_-(e_-)=2a.$$
(If the curve a union of two lines then this pametrizes only one of the lines, or possibly just a single point.)
\begin{cor}
If $A\in\Omega^1(\Sigma,\dd)$ satisfies \eqref{sols}, i.e.\ if $A=A_+\otimes e_+ + A_-\otimes e_-$ with $A_+\in\Omega^{1,0}(\Sigma,\g)$ and $A_-\in\Omega^{0,1}(\Sigma,\g)$ and if $A$ is flat, then the $\g$-connections 
$$A_\lambda=(1\otimes q_\lambda)(A)=\frac{2b}{1+\lambda}A_+ +\frac{2a}{1-\lambda}A_-$$
are flat.
\end{cor}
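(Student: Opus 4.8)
The plan is to deduce the corollary directly by chaining the two preceding propositions, the only genuine verification being that the chosen rational parametrization $\lambda\mapsto q_\lambda$ actually stays on the curve \eqref{q}. First I would evaluate $q_\lambda$ on the chosen basis vectors. Since $q_+(e_-)=q_-(e_+)=0$, $q_+(e_+)=2b$ and $q_-(e_-)=2a$, the definition $q_\lambda=\frac{q_+}{1+\lambda}+\frac{q_-}{1-\lambda}$ gives at once $q_\lambda(e_+)=\frac{2b}{1+\lambda}$ and $q_\lambda(e_-)=\frac{2a}{1-\lambda}$.

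Next I would check that $q_\lambda$ obeys the constraint \eqref{q}. Using $e_+e_-=ae_++be_-$ and linearity of $q_\lambda$, I compute $q_\lambda(e_+e_-)=a\,q_\lambda(e_+)+b\,q_\lambda(e_-)=\frac{2ab}{1+\lambda}+\frac{2ab}{1-\lambda}=\frac{4ab}{1-\lambda^2}$. On the other hand $q_\lambda(e_+)\,q_\lambda(e_-)=\frac{2b}{1+\lambda}\cdot\frac{2a}{1-\lambda}=\frac{4ab}{1-\lambda^2}$, so the two sides agree and each $q_\lambda$ satisfies \eqref{q}. This single identity is the crux of the whole statement; everything else is formal.

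With this in hand the conclusion follows by assembling the earlier results. By the preceding proposition, the hypothesis \eqref{q} on $q_\lambda$ guarantees that $p_\lambda:=\on{id}_\g\otimes q_\lambda:\dd\to\g$ satisfies \eqref{V-morph}. Then Proposition \ref{prop:p}, applied to the flat $A$ satisfying \eqref{sols}, yields that $A_\lambda=p_\lambda(A)$ is flat for every $\lambda$. Finally I would make $A_\lambda$ explicit: writing $A=A_+\otimes e_++A_-\otimes e_-$ and applying $\on{id}_\g\otimes q_\lambda$ term by term gives $A_\lambda=q_\lambda(e_+)A_++q_\lambda(e_-)A_-=\frac{2b}{1+\lambda}A_++\frac{2a}{1-\lambda}A_-$, exactly the claimed formula.

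I do not anticipate a real obstacle here: the substantive work is already carried out in Proposition \ref{prop:p} and in the proposition identifying solutions of \eqref{q} with V-morphisms, and the corollary only packages them against a concrete parametrization. The one point deserving care is bookkeeping of the normalization convention $q_+(e_+)=2b$, $q_-(e_-)=2a$, which places the simple poles of $q_\lambda(e_\pm)$ at $\lambda=\mp1$ (the two points at infinity of the hyperbola) and fixes the numerators $2b$ and $2a$ in the final expression; a sign or factor slip there would be the only way to go wrong.
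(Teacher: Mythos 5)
Your proof is correct and follows exactly the route the paper intends: verify that $q_\lambda$ satisfies the constraint \eqref{q} (equivalently, lies on the hyperbola $\bigl(q(e_+)-b\bigr)\bigl(q(e_-)-a\bigr)=ab$), then chain the proposition on maps of the form $\on{id}_\g\otimes q$ with Proposition \ref{prop:p}. The computation $q_\lambda(e_+)q_\lambda(e_-)=\frac{4ab}{1-\lambda^2}=q_\lambda(e_+e_-)$ is precisely the check the paper leaves implicit when calling $q_\lambda$ a parametrization of the curve, so nothing is missing.
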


The $\g$-valued Lax operator obtained in this way is thus
\begin{equation}\label{lax}
L(\sigma,\lambda)=\frac{2b}{1+\lambda}j_+(\sigma) +\frac{2a}{1-\lambda}j_-(\sigma)
\end{equation}
where we decomposed $j(\sigma)$ as $j(\sigma)=j_+(\sigma)\otimes e_+ + j_-(\sigma)\otimes e_-$. For completeness, the Hamiltonian \eqref{ham} is
$$\mathcal H=\half\int_{S^1}\Bigl(\theta(e_+^2)\la j_+(\sigma),j_+(\sigma)\ra_\g-\theta(e_-^2)\la j_-(\sigma),j_-(\sigma)\ra_\g\Bigr)d\sigma.$$

\section{Examples of the example}
In this section $\g$ is a compact Lie algebra and $G$ the corresponding compact 1-connected Lie group.

Let start with the case of $W=\R\oplus\R$, i.e.\ $\dd=\g\oplus\g$. The only admissible $\theta\in W^*$, up to rescaling (which can be absorbed to $\la,\ra_\g$) and exchange of the two components of $W$, is $\theta(x,y)=x-y$. (Here the main limiting factor is existence of a lagrangian Lie subalgebra $\h\subset\dd$: if $\theta(x,y)=cx+dy$ with $cd\neq0$ (the non-degeneracy condition), it forces $c=-d$.) The pairing $\la,\ra$ on $\dd$ is $\la(X_1,X_2),(Y_1,Y_2)\ra=\la X_1,X_2\ra_\g-\la Y_1,Y_2\ra_\g$.

 We have
$$e_+=(1,t)\qquad e_-=(t,1)$$
for some $-1<t<1$. The Lax operator \eqref{lax}, written in terms of $j=(j_1,j_2)$, is
$$L(\sigma,\lambda)=\frac{2t}{(1+t)(1-t)^2}\,\Bigl(
\frac{1}{1+\lambda}\,\bigl(j_1(\sigma)-t j_2(\sigma)\bigr)
+\frac{1}{1-\lambda}\,\bigl(j_2(\sigma)-tj_1(\sigma)\bigr)
\Bigr)$$
The Poisson brackets \eqref{curr} of $j_{1,2}$ are
\begin{align*}
\{j_{1a}(\sigma),j_{1b}(\sigma')\}&=f_{ab}^c\, j_{1c}(\sigma)\,\delta(\sigma-\sigma')+\delta_{ab}\,\delta'(\sigma-\sigma')\\
\{j_{2a}(\sigma),j_{2b}(\sigma')\}&=f_{ab}^c\, j_{2c}(\sigma)\,\delta(\sigma-\sigma')-\delta_{ab}\,\delta'(\sigma-\sigma')\\
\{j_{1a}(\sigma), j_{2b}(\sigma')\}&=0
\end{align*}
and the Hamiltonian \eqref{ham} is
$$\mathcal H=\frac{1}{2(1-t^2)}\int_{S^1}\Bigl((1+t^2)\bigl(\la j_1(\sigma),j_1(\sigma)\ra_\g+\la j_2(\sigma),j_2(\sigma)\ra_\g\bigr)
-4t\la j_1(\sigma),j_2(\sigma)\ra_\g\Bigr)d\sigma.$$
The degenerate case $t=0$ (when $a=b=0$) corresponds to the WZW-model on $\g$.

The natural choice for a Lagrangian Lie subalgebra $\h\subset\dd$ is the diagonal $\g\subset\dd$. The target space of the $\sigma$-model is $D/G\cong G$. It is the so-called ``$\lambda$-deformed $\sigma$-model'' introduced by Sfetsos in \cite{sf} (Sfetsos's $\lambda$ is our $t$).

Let us now consider the case $W=\C$, which is the richest one. In this case any non-zero $\theta\in W^*$ is suitable. Let $\theta(z)=\on{Im}(e^{2i\alpha}z)$ for some $\alpha\in\R$. We thus have $\dd=\g\otimes\C=\g_\C$ (seen as a real Lie algebra) with the pairing $\la X,Y\ra=\on{Im}(e^{2i\alpha}\la X,Y\ra_{\g_\C})$ where $\la,\ra_{\g_\C}$ is the $\C$-bilinear extension of $\la,\ra_\g$.

 In this case 
$$e_+=e^{-i\alpha+i\phi}\qquad e_-=e^{-i\alpha-i\phi}$$ for some $\phi\in(0,\pi/2)$. If $e^{2i\alpha}=e^{\pm 2i\phi}$ then the resulting $\sigma$-model (regardless of the choice of $\h\subset\dd$) is the WZW-model on $G$.

The Lax operator \eqref{lax} is
$$
L(\sigma,\lambda)=\frac{1}{2\sin^2 2\phi}\Bigl(
\frac{e^{2i\alpha}-e^{-2i\phi}}{1+\lambda}+
\frac{e^{2i\alpha}-e^{2i\phi}}{1-\lambda}\Bigr)J(\sigma)
+c.c.
$$
(where $c.c.$ stands for ``complex conjugate''). Here  $J=j_\mathit{re}+ij_\mathit{im}$  and $\bar J=j_\mathit{re}-ij_\mathit{im}$ where $j_\mathit{re}$ and $j_\mathit{im}$ are given by $j=j_\mathit{re}\otimes 1+j_\mathit{im}\otimes i$,  their Poisson brackets \eqref{curr} (written in an orthonormal basis of $\g$) are
\begin{align*}
\{J_a(\sigma),J_b(\sigma')\}&=f_{ab}^c\, J_c(\sigma)\,\delta(\sigma-\sigma')+2i e^{-2i\alpha}\,\delta_{ab}\,\delta'(\sigma-\sigma')\\
\{\bar J_a(\sigma),\bar J_b(\sigma')\}&=f_{ab}^c\, \bar J_c(\sigma)\,\delta(\sigma-\sigma')-2i e^{2i\alpha}\,\delta_{ab}\,\delta'(\sigma-\sigma')\\
\{J_a(\sigma),\bar J_b(\sigma')\}&=0.
\end{align*}
The Hamiltonian \eqref{ham} is
\begin{multline*}
\mathcal H=\half\int_{S^1}\biggl(\frac{\sin2\phi}{2}\la J(\sigma),\bar J(\sigma)\ra_\g\\ -\frac{\sin4\phi}{4}\Bigl(e^{2i\alpha}\la J(\sigma),J(\sigma)\ra_\g+e^{-2i\alpha}\la \bar J(\sigma),\bar J(\sigma)\ra_\g\Bigr)\biggr)d\sigma
\end{multline*}

A suitable Lagrangian Lie subalgebra $\h\subset\dd$ can be found as follows. Let $\mf n\subset\g_\C=\dd$ be the complex nilpotent Lie subalgebra spanned by the positive root spaces and let $\mf t\subset\g$ be the Cartan Lie subalgebra. Let $0\neq z\in\C$ be such that $\theta(z^2)=0$; up to a real multiple we have $z=e^{-i\alpha}$ or $z=ie^{-i\alpha}$. Then 
$$\h=z\mf t + \mf n\subset \g_\C=\dd$$
is a real Lie subalgebra of $\dd$ which is clearly Lagrangian. If $z\notin\R$ then $\h$ is transverse to $\g\subset\dd$ and we have an identification $D/H\cong G$ for the target space of the $\sigma$-model.

The case of $\alpha=0$ corresponds to Klimčík's Yang-Baxter $\sigma$-model \cite{kl}. The general case is the Yang-Baxter $\sigma$-model with WZW term introduced in \cite{dmv2} and reinterpreted as a $\sigma$-model of Poisson-Lie type in \cite{kl-ybwzw}.

%
%

The final case is $W=\R[\epsilon]/(\epsilon^2)$. After rescaling
 $\epsilon$ and $\la,\ra_\g$ we can suppose that $\theta(x+y\epsilon)=2tx+y$ for some $t\in\R$ and that
 $$e_+=1+(1-t)\epsilon\qquad e_-=1-(1+t)\epsilon.$$
Using the notation $j=j_1\otimes 1+j_\epsilon\otimes\epsilon$, we get
$$L(\sigma,\lambda)=\frac{1+t}{2(1+\lambda)}\bigl((1+t)j_1+j_\epsilon)\bigr)+\frac{1-t}{2(1-\lambda)}\bigl((1-t)j_1-j_\epsilon)\bigr).$$
The Poisson brackets are
\begin{align*}
\{j_{1a}(\sigma),j_{1b}(\sigma')\}&=f_{ab}^c\,j_{1c}(\sigma)\,\delta(\sigma-\sigma')\\
\{j_{1a}(\sigma),j_{2b}(\sigma')\}&=f_{ab}^c\,j_{2c}(\sigma)\,\delta(\sigma-\sigma')+\delta_{ab}\,\delta'(\sigma-\sigma')\\
\{j_{2a}(\sigma),j_{2b}(\sigma')\}&=-2t\,\delta_{ab}\,\delta'(\sigma-\sigma')
\end{align*}
and the Hamiltonian
$$\mathcal H=\half\int_{S^1}(1+t^2)\la j_0,j_0\ra_\g+t\la j_0,j_\epsilon\ra_\g+\la j_\epsilon,j_\epsilon\ra_\g$$

In this case the natural $\h\subset\dd=\g[\epsilon]/(\epsilon^2)$ is $\h=\epsilon\g$, which gives $D/H=G$. When $t=0$ the $\sigma$-model is the principal chiral model on $G$, when $t=\pm1$ we get the WZW model, and for other values of $t$ we get models given by the invariant metric on $G$ and by a multiple of the Cartan 3-form.

\end{document}